\newtheorem{theorem}{Theorem}
\newtheorem{attack}{Attack}
\newcommand\underrel[2]{\mathrel{\mathop{#2}\limits_{#1}}}
\newcommand{\stkout}[1]{\ifmmode\text{\sout{\ensuremath{#1}}}\else\sout{#1}\fi}
\begin{document}

\preprint{APS/123-QED}

\title{Extending loophole-free nonlocal correlations to arbitrarily large distances}

\author{Anubhav Chaturvedi}
\email{anubbhav.chaturvedi@phdstud.ug.edu.pl}
\affiliation{Faculty of Applied Physics and Mathematics,
 Gda{\'n}sk University of Technology, Gabriela Narutowicza 11/12, 80-233 Gda{\'n}sk, Poland}
\affiliation{International Centre for Theory of Quantum Technologies (ICTQT), University of Gda{\'n}sk, 80-308
Gda\'nsk, Poland}
\author{Giuseppe Viola}%
\affiliation{International Centre for Theory of Quantum Technologies (ICTQT), University of Gda{\'n}sk, 80-308 Gda\'nsk, Poland}
\author{Marcin Paw{\l}owski} 
\affiliation{International Centre for Theory of Quantum Technologies (ICTQT), University of Gda{\'n}sk, 80-308 Gda\'nsk, Poland}

\date{\today}

\begin{abstract}\normalem
    One of the most striking features of quantum theory is that it allows spatially separated observers to share correlations that resist \emph{local hidden variable} (classical) explanations, a phenomenon referred to as Bell nonlocality. Besides their foundational relevance, the nonlocal correlations enable distant observers to accomplish classically inconceivable information processing and cryptographic feats such as unconditionally secure device-independent key distribution schemes. However, the distances over which nonlocal correlations can be realized in \emph{state-of-the-art} Bell experiments remain severely limited owing to the high threshold efficiencies of the detectors and the fragility of the nonlocal correlations to experimental noise. Instead of looking for quantum strategies with marginally lower threshold requirements, we exploit the properties of loophole-free nonlocal correlations, which are experimentally attainable today, albeit at short distances, to extend them over arbitrarily large distances. Specifically, we consider Bell experiments wherein the spatially separated parties randomly choose the location of their measurement devices in addition to their measurement settings. We demonstrate that when devices close to the entanglement source are perfect and witness extremal loophole-free nonlocal correlations, such correlations can be extended to devices placed arbitrarily far from the source, with \emph{almost-zero} detection efficiency and visibility. To accommodate imperfections close to the source, we demonstrate a specific analytical tradeoff: the higher the loophole-free nonlocality close to the source, the lower the threshold requirements away from the source. We utilize this analytical tradeoff paired with optimal quantum strategies to estimate the critical requirements of a measurement device placed away from the source in the simplest non-trivial scenario and formulate a versatile numerical method applicable to generic network scenarios. Our results demonstrate that the properties of already feasible short-distance loophole-free nonlocal correlations can be exploited to extend them to longer distances, enabling device-independent information processing and cryptography fuelled by long-distance loophole-free nonlocal correlations to become near-term commonplace technologies.
\end{abstract}

     \normalem                         
\maketitle
\textbf{\emph{Introduction:---}} Spatially separated observers cannot communicate faster than the speed of light, \`{a} la \emph{relativity}. However, they can share quantum correlations born of local measurements performed on entangled particles which resist \emph{local hidden variable} (classical) explanations. This phenomenon is called \emph{Bell nonlocality} \cite{Bell1964,BellNonlocality}. The nonlocal correlations enable the observers to accomplish classically inconceivable information processing and cryptographic feats such as unconditionally secure \footnote{Security guaranteed solely by the laws of physics, specifically, those of quantum mechanics.} device-independent (DI) \footnote{A device-independent inference is based exclusively on the observed measurement statistics, without requiring any characterization of the measurement devices, essentially independent of the underlying physical description.} quantum key distribution~\cite{Ekert1991,mayers1998quantum,BHK05,Acin2007,PhysRevLett.113.140501,DIQKDReview} and randomness expansion~\cite{colbeck2009quantum,pironio2010random,liu2021device,shalm2021device}. The efficacy of these applications necessitates loophole-free certification of nonlocality. The most challenging loophole impeding practical \emph{long-distance} DI cryptography is the detection loophole \cite{Pearle1970}, exploiting which a malicious adversary can fake nonlocal correlations if a sufficient fraction of the entangled particles remains undetected. 

A measurement device's detection efficiency, $\eta$, is the probability with which the device detects an incoming system emitted by the source. In Bell tests, closing the detection loophole amounts to having a detection efficiency above a characteristic threshold value, $\eta^*$, often referred to as the \emph{critical detection efficiency}, below which \emph{local hidden variable} models can simulate the considered nonlocal correlation. In symmetric Bell tests, wherein all detectors are equally inefficient, $\eta^*$ is a characteristic property of the target nonlocal correlation. For instance, in the simplest bipartite symmetric Bell scenario, quantum correlations that maximally violate the \emph{Clauser-Horne-Shimony-Holt} (CHSH) Bell inequality \cite{Clauser1969} have a critical detection efficiency, $\eta^* \approx0.828$ \cite{Garg1987}. However, the effective detection efficiency, $\eta$, depends not only on the properties of the measurement device but also on the losses incurred during transmission. In photonic Bell experiments, the effective detection efficiency decays exponentially with the length of the optical fiber, $l$, such that $\eta = \eta_0 10^{-\frac{\alpha l}{10}}$, where $\eta_0$ is the detection efficiency of the measuring apparatus due to the use of imperfect detectors, and $\alpha$ is the attenuation
coefficient typically $\approx0.2dB/km$ at a wavelength of $1550nm$ (third telecom window) \cite{OFLAA}. Therefore, the lower the critical detection efficiency of a nonlocal correlation, the further away the measurement devices can be from the source while retaining loophole-free nonlocal behavior. Consequently, the detection-loophole is practically unavoidable in photonic Bell experiments and DIQKD systems when using optical fibers of about $5km$ \cite{FakingBellNonlocalityMakarov} and $3.5km$  \cite{Zapatero2023} in length, respectively. Another crucial quantity for long-distance loophole-free Bell tests is the visibility, $\nu$, of the entangled quantum systems, which quantifies the amount of noise added during transmission and due to imperfections in the source. Analogously to the critical detection efficiency, each nonlocal quantum strategy has a characteristic threshold value of visibility, $\nu^*$, below which the consequent correlations cease to be nonlocal.

Over the years, several proposals have identified nonlocal quantum correlations with lower critical detection efficiencies. For instance, in the symmetric CHSH scenario, one can reduce it down to $\eta^*=2/3$ by using a pair of almost-product partially entangled qubits \cite{Eberhard1993}. However, this lower critical detection efficiency comes at the cost of very high susceptibility to noise with $\nu^*\approx 1$. The other proposals fall into one of the two categories, $(i.)$ the ones which increase the complexity of the quantum set-up by either utilizing entangled quantum systems of higher local dimension \cite{Massar2002, Vertesi2010,miklin2022exponentially, XuLatest}, or by increasing the number of spatially separated parties \cite{Larsson2001, CRV08, PVB12}, and $(ii.)$ the ones which invoke theoretical idealizations such as perfect detectors for a measuring party \cite{CL07, BGSS07, Garbarino10, AQCFCT12}. While the latter are clearly of little practical significance, the former necessitates more intricate state preparation procedures, which invariably lead to a higher susceptibility to noise and experimental fragility.

Consequently, long-distance loophole-free nonlocal correlations remain elusive, even in the near-term future. This is reflected in the current \emph{state-of-the-art} combinations of $(\eta,\nu)$ reported in photonic Bell experiments over distances $\leq 400 m$, $(0.774, 0.99)$~\cite{giustina2015significant}, $(0.763, 0.99)$~\cite{shalm2021device}, and $(0.8411, 0.9875)$~\cite{liu2021device}. 
On the other hand, if DI cryptography is to become a near-term commonplace technology, operationally certifiable robust nonlocal correlations must be sustained over distances orders of magnitude larger $( \gg 100km)$. Due to the sheer enormity of this gap, the traditional approach of looking for nonlocal quantum correlations with marginally lower critical detection efficiency seems futile. Instead, in this work, we exploit the properties of strong nonlocal correlations, which can be readily attained today, albeit at short distances, to extend them to arbitrarily large distances.

Specifically, we consider a generalization to the standard \emph{Bell experiments}, wherein each round, the measuring parties randomly choose the location (distance from the source) of their measurement devices in addition to their measurement settings. It then follows from relativity, and specifically from the so-called \emph{non-signaling condition}, that the behavior of any particular measurement device remains unaffected by the changes in the location of spatially separated measurement devices. Based exclusively on this \emph{relativistic fact} and the operational validity of quantum mechanics, we demonstrate that nonlocal correlations can be operationally certified arbitrarily far away from the source, which is to say, with \emph{arbitrarily inefficient} measurement devices, when devices close to the source operate flawlessly and witness extremal nonlocal correlations. We then proceed to derive an analytic trade-off \emph{specific} to the CHSH scenario: the higher loophole-free nonlocality close to the source, as measured by the violation of the CHSH inequality, the lower the threshold value for local hidden variable explanations away from the source. We utilize this trade-off to estimate the critical detection efficiency and visibility of the measurement device placed away from the source when the devices close are imperfect, thereby demonstrating the robustness of the effect. We also find optimal \emph{tilted} quantum strategies that minimize these requirements. Moreover, utilizing certifiable randomness as a measure of the nonlocal behavior of a device, we present a \emph{versatile numerical technique} based on the \emph{Nieto-Silleras} hierarchy of semi-definite programs \cite{Nieto_Silleras_2014, Bancal_2014}, to estimate the critical requirements of individual measurement devices in generic network scenarios with several spatially separated measurement devices. Finally, we discuss experimental setups utilizing relay switches to demonstrate this effect,  more complex network scenarios entailing multiple measurement devices, the possibility of DI cryptography schemes fuelled by this effect, and the key challenges that lay on the way.     

 \begin{figure}\centering
 \includegraphics[width=\columnwidth]{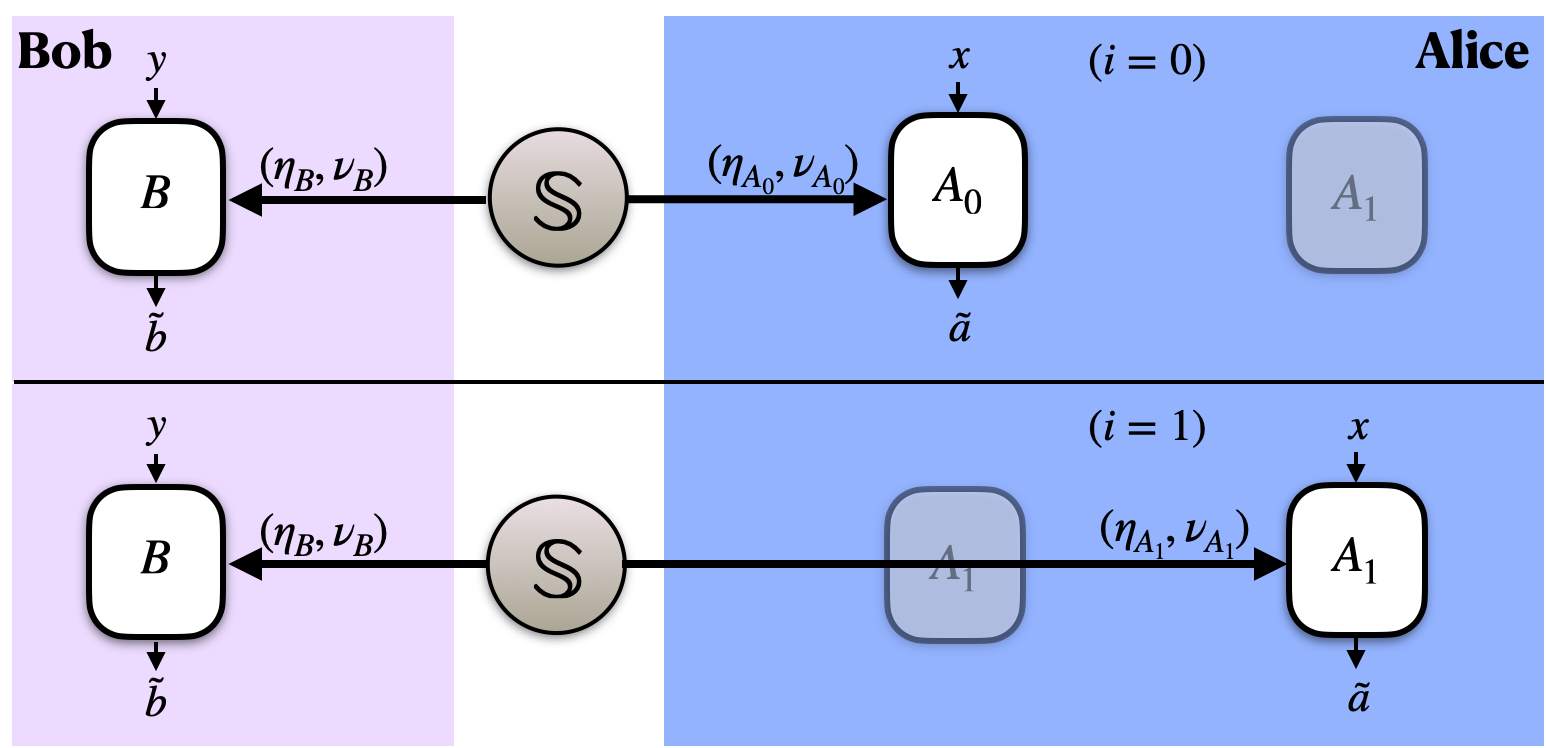}
 \caption{\label{Topology} \textbf{\emph{Topology:---}} The graphic is a schematic depiction of the generalized or 
 routed Bell CHSH experiment introduced in this work. In each round of the experiment, just as in the standard case, the parties choose their respective measurement settings, $x,y\in\{0,1\}$ and obtain outcomes, $\tilde{a},\tilde{b}\in\{+1,-1,\bot\}$, where the outcome, $\bot$, signifies the ``no-click" event. While Bob's measurement device, $B$, is at a fixed distance from the source throughout the experiment with an unvarying effective detection efficiency and visibility, $(\eta_B,\nu_B)$, the location of Alice's measurement device, $A_i$, depends on a randomly chosen input bit, $i\in\{0,1\}$. When $i=0$, Alice places her measurement device, $A_0$, close to the source with an effective detection efficiency and visibility, $(\eta_{A_0},\nu_{A_0})$, whereas when $i=1$, she places her measurement device, $A_1$, further away from the source, with a lower detection efficiency and visibility, $(\eta_{A_1},\nu_{A_1})$, such that, $\eta_{A_1}<\eta_{A_0}$, $\nu_{A_1}<\nu_{A_0}$.}
 \end{figure}
\textbf{\emph{Preliminaries:---}}
Let us consider the simplest bipartite Bell scenario entailing a source, $\mathbb{S}$, distributing \emph{entangled} quantum systems, ideally in a two-qubit pure state, $\ket{\psi}\in \mathbb{C}^2 \otimes \mathbb{C}^2$, to two spatially separated parties, Alice and Bob. The parties have measurement devices with binary inputs, $x,y\in\{0,1\}$, specifying the measurement settings, and produce binary outcomes, $a,b\in \{+1,-1\}$, respectively. In ideal circumstances, the measurement devices perform measurements corresponding to binary outcome projective observables, $\hat{a}_{x} \in B(\mathbb{C}^2),\hat{b}_{y}\in B(\mathbb{C}^2)$. The three tuple, $\mathcal{Q}\equiv \left(\ket{\psi},\{\hat{a}_{x}\}_x,\{\hat{b}_{y}\}_y\right)$, constitutes a quantum strategy (entailing operational instructions) which ideally results in the experimental behavior, $\mathbf{p}\equiv\left\{p(a,b|x,y)=\frac{1}{4}\expval{({\mathbb{I}+a\hat{a}_x})\otimes({\mathbb{I}+b\hat{b}_y})}{\psi}\right\}\in \mathbb{R}^{16}_{+}$. In general, up to local-relabelling, a given behavior, $\mathbf{p}$, is said to be \emph{nonlocal} if and only if it violates the CHSH inequality, 

\begin{align} \label{CHSH}
    C(\mathbf{p}) &\equiv  \sum_{x,y} (-1)^{x\cdot y}\expval{\hat{a}_{x}\hat{b}_{y}} \underrel{\mathcal{L}}{\leqslant} 2,
\end{align} 
where $\expval{\hat{a}_{x}\hat{b}_{y}}=\sum_{a,b}abp(a,b|x,y)$. The inequality \eqref{CHSH} holds for all behaviors, $\mathbf{p} \in\mathbb{R}^{16}_{+}$, which admit \emph{local hidden variable} explanations ($\mathcal{L}$), such that, $p(a,b|x,y){=}\sum_{\lambda\in\Lambda}p(\lambda)p^{A}_\lambda(a|x)p^{B}_{\lambda}(b|y)$, where $\lambda$ is the local hidden variable, $\Lambda$ is a measurable hidden variable state space, $p(\lambda)$ specifies the probability of the system occupying the state corresponding to $\lambda$, and for a specific $\lambda$, the conditional probability distributions, $\{p^{A}_\lambda(a|x)\}$ and $\{p^B_\lambda(b|y)\}$, represent stochastic response schemes specifying the outcome probabilities for Alice and Bob, respectively. 

However, the actual measurement devices may be imperfect and sometimes fail to detect the incoming quantum system, an event referred to as the ``no-click" event, and the effective probability with which a measurement device, $D$, ``clicks", is referred to as its detection efficiency, $\eta_D\in [0,1]$ \footnote{We assume that all detectors in a device are equally efficient.}, where, $\eta_D=1$ signifies perfect detectors . We consider a generalization of the simplest Bell experiment, depicted in FIG. \ref{Topology}, wherein Bob's measurement device, $B$, is at a fixed distance from the source throughout the experiment and has an unvarying effective detection efficiency, $\eta_B$. Alice, on the other hand, randomly chooses the spatial location of her device based on an additional input bit, $i\in\{0,1\}$. When $i=0$, she places her measurement device, $A_0$, close to the source achieving an effective detection efficiency, $\eta_{A_0}$, whereas when $i=1$, she places her device, $A_1$, further away from the source, attaining a lower effective detection efficiency, $\eta_{A_1}\leq \eta_{A_0}$. Additionally, to account for imperfections in the source and noise added during transmission from the source, we associate effective visibilities, $\nu_B,\nu_{A_i} \in [0,1]$ to each measurement device, such that the quantum state shared between the measurement devices, $(A_i,B)$, is,

\begin{align} 
    \rho(\nu_i) & =  \nu_{A_i}\nu_B\ketbra{\psi} + \nu_{A_i}(1-\nu_B) \left(\rho_B\otimes \frac{\mathbb{I}}{2}\right) \\ \nonumber
    & \hspace{0.1in} + (1-\nu_{A_i})\nu_B \left(\frac{\mathbb{I}}{2}\otimes\rho_A\right) + (1-\nu_{A_i})(1-\nu_B)\frac{\mathbb{I}_4}{4},
\end{align}

where $\rho_A=\Tr_B(\ketbra{\psi})$, $\rho_B=\Tr_A(\ketbra{\psi})$, $\mathbb{I}$ is the two-dimensional identity operator, and $\mathbb{I}_4=\mathbb{I}\otimes \mathbb{I}$. Analogously to the detection efficiencies, the effective visibility of Alice's measurement device decreases with the distance from the source, such that $\nu_{A_1}\leq \nu_{A_0}$.

Treating the no-click event as an additional outcome, $\bot$, the parties observe the experimental behavior in the form of conditional probability distributions, $\mathbf{p}_{(exp)}\equiv\{p(\Tilde{a},\Tilde{b}|(x,i),y)\} \in \mathbb{R}^{72}_{+}$, where $\Tilde{a},\Tilde{b} \in \{+1,-1,\bot\}$. A convenient way to post-process the experimental behavior, $\mathbf{p}_{(exp)}$, which avoids considering additional outcomes and as well as the fair-sampling assumption, is to assign a valid outcome, say $+1$, to each no-click event, such that, the effective distribution reduces to $\mathbf{p}_{(exp)}^{(\bot \mapsto 1)} \equiv\{p(a,b|(x,i),y)\}\in \mathbb{R}^{32}_{+}$ \cite{giustina2015significant,MarcinPostProcess}. One of the benefits of such a post-processing is the applicability of well-studied reliable means of quantifying loophole-free nonlocal correlations such as the observed value of the CHSH expression, 
\begin{equation}
    C_{A_i,B}(\mathbf{p}_{(exp)}^{(\bot \mapsto 1)}) \equiv  \sum_{x,y} (-1)^{x\cdot y}\expval{\hat{a}_{(x,i)}\hat{b}_{y}}
\end{equation}
 where $\expval{\hat{a}_{(x,i)}\hat{b}_{y}}=\sum_{a,b}abp(a,b|(x,i),y)$. 
 
 The measurement devices $(A_i,B)$ are said to share loophole-free nonlocal correlations
 if the behaviour $\mathbf{p}_{(exp)}^{(\bot \mapsto 1)}$ \emph{does not allow} for a \emph{local hidden variable} explanation of the form,
\begin{equation} \label{LHVGenBellExp}
    p(a,b|(x,i),y)\underrel{{\mathcal{L}(A_i)}}{=}\sum_{\lambda\in\Lambda}p(\lambda)p^{A}_\lambda(a|x,i)p^{B}_{\lambda}(b|y),
\end{equation}
for all $a,b\in\{+1,-1\}$ and $x,y\in\{0,1\}$, where $\lambda$ is a local hidden variable, $\Lambda$ is a measurable hidden variable state space, $p(\lambda)$ specifies the probability of the shared system occupying the state corresponding to $\lambda$, and for a specific $\lambda$, the conditional probability distributions, $\{p^{A}_\lambda(a|x,i)\}$ and $\{p^B_\lambda(b|y)\}$, represent stochastic response schemes specifying the outcome probabilities for Alice and Bob, respectively. As assigning a local pre-determined outcome to the ``no-click" event cannot increase the \emph{local hidden variable} bound, $2$, of the CHSH expression, the experimental violation of the CHSH inequality \eqref{CHSH}, $C_{A_iB}(\mathbf{p}_{(exp)}^{(\bot \mapsto 1)})> 2$, constitutes a sufficient operational condition for certifying loophole-free nonlocal correlations between the spatially separated measurement devices, $(A_i,B)$. However, as we demonstrate below, this threshold requirement for the certification of loophole-free nonlocal correlations between $(A_1,B)$ reduces drastically when $(A_0,B)$ witness loophole-free violation of the CHSH inequality \eqref{CHSH}. Specifically, let us suppose that the measurement devices close to the source $(A_0,B)$ witness a loophole-free violation of the CHSH inequality $C_{A_0B}(\mathbf{p}_{(exp)}^{(\bot \mapsto 1)})> 2$, this implies that their behaviour must spring from an underlying quantum set-up, i.e., from quantum measurements $\{M^{(x,i=0),\lambda}_a\in B_+(\mathscr{H}_{A}),M^{y,\lambda}_b \in B_+(\mathscr{H}_B)\}$ on shared entangled states $\{\rho_\lambda\in B_+(\mathscr{H}_{A}\otimes\mathscr{H}_B)\}$, where $\mathscr{H}_{A},\mathscr{H}_{B}$ are arbitrary underlying Hilbert spaces, such that, for all $a,b\in\{+1,-1\}$ and $x,y\in\{0,1\}$,
\begin{equation} \label{QuantumLHVGenBellExp}
    p(a,b|(x,i=0),y){=}\sum_{\lambda\in\Lambda}p(\lambda)\Tr(\rho_{\lambda}M^{(x,i=0),\lambda}_a\otimes M^{y,\lambda}_b).
\end{equation}
In particular, this implies that
 Bob's outcome must spring from genuine quantum measurements $\{M^{y,\lambda}_b \in B_+(\mathscr{H}_B)\}$ on his part of  entangled states $\{\rho^{(B)}_\lambda = \Tr_A\rho_\lambda\}$ shared with $A_0$, such that, $p^{B}_{\lambda}(b|y)=\Tr(\rho^{(B)}_\lambda M^{y,\lambda}_b)$. 
 
 As the source as well as the the internal workings of Bob's measurement device, $B$,  remain unaffected by the changes in the location of Alice's measurement device, i.e., by the choice $i$, if the behaviour Alice's other device were possess a \emph{local hidden variable} explanation of the form \eqref{LHVGenBellExp}, i.e., it obtains its measurement outcomes from the post-processing of a shared local hidden variable $\lambda$, then, 
\begin{align} \label{A1LHVGenBellExp2} 
    p(a,b|&(x,i=1),y)\underrel{\mathcal{L}(A_1)}{=}\\ \nonumber &\sum_{\lambda\in\Lambda}p(\lambda)p^{A}_\lambda(a|x,i=1)\Tr(\rho^{(B)}_\lambda M^{y,\lambda}_b).
\end{align}
{Notice, in the local hidden variable models invoked here, the local hidden variable $\lambda$ is implicitly assumed to be independent of the location of Alice's measurement device, i.e., $p(\lambda,i)=p(\lambda)p(i)$. As we treat $i\in\{0,1\}$ as an additional Alice's input, this assumption follows from the so-called ``measurement independence" or ``free-choice" assumption invoked while describing local hidden variable models in standard Bell experiments. In particular, this assumption has interesting cryptographic consequences, which are deferred to the discussions section, towards the end of the manuscript.}
Now, we are prepared to present our central result.

\textbf{\emph{Perfect devices close to the source:---}}
 Let us {first} consider {the} ideal case wherein the measurement devices located close to the source, $(A_0,B)$, are effectively perfect, i.e., $\eta_{B}=\eta_{A_0}=\nu_{B}=\nu_{A_0}=1$. {Then, via the following the Theorem, we demonstrate that extremal loophole-free nonlocal correlations witnessed close to source can be extended to a measurement device placed arbitrarily far away from the source,}

\begin{theorem}[\textbf{\emph{Nonlocality at arbitrary distance}}] \label{IdealCase}
If the measurement devices close to the source, $(A_0,B)$, are perfect, i.e., $\eta_{B}=\eta_{A_0}=\nu_{B}=\nu_{A_0}=1$, and witness maximally nonlocal correlations, such that, $C_{A_0B}(\mathbf{p}_{(exp)}^{(\bot \mapsto 1)})= 2\sqrt{2}$, then such nonlocal correlations may be operationally extended to Alice's other measurement device placed arbitrarily far away from the source, i.e., loophole-free nonlocal correlations {(such that there does not exist any \emph{local hidden variable} explanation of the form \eqref{A1LHVGenBellExp2})} between $(A_1,B)$ can be operationally certified for any \emph{non-zero} values of $(\eta_{A_1},\nu_{A_1})$.
\end{theorem}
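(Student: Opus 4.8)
The plan is to reduce the claim to a single sharp inequality: that \emph{every} model of the restricted form \eqref{A1LHVGenBellExp2} assigns the value $C_{A_1B}=0$ to the CHSH expression, so that the relevant local threshold for $(A_1,B)$ collapses from $2$ all the way down to $0$, and then to exhibit a quantum strategy that beats $0$ for \emph{any} non-zero $(\eta_{A_1},\nu_{A_1})$. The whole argument rests on the observation that maximal violation close to the source rigidly fixes Bob's side of the decomposition \eqref{QuantumLHVGenBellExp}, and that this same side must, by non-signaling, reappear verbatim in \eqref{A1LHVGenBellExp2}.

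First I would exploit the rigidity (self-testing) of the Tsirelson bound. Since the CHSH functional is affine in the behavior, the decomposition \eqref{QuantumLHVGenBellExp} gives $C_{A_0B}=\sum_\lambda p(\lambda)\,C_\lambda$, where $C_\lambda$ is the CHSH value of the quantum strategy $(\rho_\lambda,\{M^{(x,0),\lambda}_a\},\{M^{y,\lambda}_b\})$. As each $C_\lambda\leqslant 2\sqrt{2}$ while the average equals $2\sqrt{2}$, we must have $C_\lambda=2\sqrt{2}$ for $p$-almost every $\lambda$. For each such $\lambda$ the self-testing theorem for maximal CHSH violation applies: up to a local isometry the shared state is a two-qubit maximally entangled state (tensored with an irrelevant ancilla) and Bob's two observables form a pair of anticommuting qubit measurements. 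The only consequence I actually need is that Bob's reduced state is maximally mixed and his measurements are projective and traceless on it, so that the marginals, being invariant under the local isometry, satisfy $p^{B}_\lambda(b|y)=\Tr\!\big(\rho^{(B)}_\lambda M^{y,\lambda}_b\big)=\tfrac{1}{2}$ for all $b,y$ and all relevant $\lambda$.

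Next I would substitute this into the restricted local model \eqref{A1LHVGenBellExp2}, which is legitimate precisely because non-signaling guarantees that Bob's device---hence the quantum pieces $\rho^{(B)}_\lambda,M^{y,\lambda}_b$---is unaffected by Alice's location bit $i$. Using $p^{B}_\lambda(b|y)=\tfrac12$, every correlator factorizes as $\expval{\hat{a}_{(x,1)}\hat{b}_{y}}=\big(\sum_a a\sum_\lambda p(\lambda)p^{A}_\lambda(a|x,1)\big)\big(\sum_b b\,\tfrac12\big)=0$, since $\sum_b b/2$ vanishes. Hence $C_{A_1B}=0$ for \emph{every} behavior admitting an explanation of the form \eqref{A1LHVGenBellExp2}; any strictly positive observed CHSH value therefore certifies loophole-free nonlocality between $(A_1,B)$.

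Finally I would establish achievability for arbitrary non-zero $(\eta_{A_1},\nu_{A_1})$. With Bob perfect, the state seen by $(A_1,B)$ is $\nu_{A_1}\ketbra{\psi}$ plus $(1-\nu_{A_1})$ times a product term with a maximally mixed factor, whose noise contribution to any correlator vanishes because Bob's reduced state is $\mathbb{I}/2$ and his observables are traceless; the $\bot\mapsto 1$ assignment on Alice's losses likewise contributes only through $\expval{\hat{b}_y}=0$. Letting Alice at $A_1$ use the Tsirelson-optimal observables paired with Bob's self-tested ones yields $C_{A_1B}=2\sqrt{2}\,\eta_{A_1}\nu_{A_1}>0$, which exceeds the threshold $0$ established above and completes the proof. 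I expect the main obstacle to be the \emph{exactness} of the uniform-marginal claim $p^{B}_\lambda(b|y)=\tfrac12$---it must hold $\lambda$-by-$\lambda$ and as a genuine equality, which is exactly what the combination of the convexity/saturation argument and the rigidity of CHSH self-testing supplies; the loss-and-visibility bookkeeping in the achievability step is comparatively routine.
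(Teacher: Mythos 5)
Your proposal is correct and follows essentially the same route as the paper's own proof: self-testing of the maximal CHSH violation pins Bob's $\lambda$-wise marginals to be uniform ($\expval{\hat{b}_y}_\lambda=0$), which collapses the local bound for $(A_1,B)$ to zero under any model of the form \eqref{A1LHVGenBellExp2}, and the honest noisy strategy then achieves $C_{A_1B}=2\sqrt{2}\,\eta_{A_1}\nu_{A_1}>0$. If anything, your explicit convexity/saturation argument (affinity of CHSH forcing $C_\lambda=2\sqrt{2}$ for $p$-almost every $\lambda$) spells out a step the paper only asserts when it invokes self-testing ``for all $\lambda$.''
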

\begin{proof}
    {Let the parties employ the maximally nonlocal \emph{isotropic} two qubit strategy, $\mathcal{Q}_{iso} \equiv \left(\ket{\phi^+},\{\sigma_z,\sigma_x\},\{\frac{1}{\sqrt{2}}(\sigma_z\pm\sigma_x)\}\right)$, where $\ket{\phi^+}=\frac{1}{\sqrt{2}}(\ket{00}+\ket{11})$. Consequently, the measurement devices witness the maximum quantum violation of the CHSH inequality \eqref{CHSH}, such that $C_{A_0B}(\mathbf{p}_{(exp)}^{(\bot \mapsto 1)})= 2\sqrt{2}$.}
 
 {The loophole-free observation of such extremal nonlocal quantum correlations not only discards \emph{local hidden variable} explanations of the form \eqref{LHVGenBellExp} for the measurement devices $(A_1,B)$, but also uniquely identifies the effective underlying shared quantum states $\{\rho_\lambda\}$ and the local measurements $\{M^{(x,i=0),\lambda}_a\},\{M^{y,\lambda}_b \}$ to be equivalent to the ones in $\mathcal{Q}_{iso}$, respectively, up to auxiliary degrees of freedom and local isometries for all $\lambda$, a phenomenon referred to as DI \emph{self-testing} \cite{Supic2020selftestingof}. In particular, this implies that the effective states $\{\rho^{(B)}_{\lambda}\}$ of Bob's local subsystem on which the anticommuting observables act non-trivially are equivalent to the maximally mixed qubit state, $\frac{\mathbb{I}}{2}$, such that Bob's measurement outcomes must be \emph{intrinsically random}, i.e., $\expval{\hat{b}_y}=0$ for all $y\in\{0,1\}$. As the internal workings of Bob's measurement device, $B$, remain unaffected by the changes in the location of Alice's measurement device, her effective subsystem and observables remain unaffected by Alice's choice of $i$.}

{Now, if Alice's other device, $A_1$, were to be classical, i.e., it obtains its measurement outcomes from the post-processing of a shared local hidden variable, $\lambda$, such that, the behavior witness by the measurement devices $(A_1,B)$ allows for a \emph{local hidden variable} explanation of the form \eqref{A1LHVGenBellExp2}, then the outcomes of $A_1$ cannot be correlated to that of Bob, which results in,}
{
\begin{align}\label{ClassicalCHSH} \nonumber
    {C_{A_1B}(\mathbf{p}_{(exp)}^{(\bot \mapsto 1)})}&\underrel{{\mathcal{L}(A_1)}}{=} \sum_{\lambda \in \Lambda}p(\lambda)\sum_{x,y} (-1)^{x\cdot y}\expval{\hat{a}_{(x,1)}}_\lambda\expval{\hat{b}_y}_{\lambda}, \\
    & = 0,
\end{align}}
{where $\expval{\hat{a}_{(x,1)}}_\lambda=\sum_a a p^{A}_{\lambda}(a|x,1)$, $\expval{\hat{b}_y}_\lambda=\sum_b b p^{B}_{\lambda}(b|y)$, and the second equality follows from the fact that the maximal quantum violation of the CHSH inequality, $C_{A_0B}(\mathbf{p}_{(exp)}^{(\bot \mapsto 1)})= 2\sqrt{2}$, self-tests, and hence, can only be attained by a unique quantum behavior, which in-turn implies that, $\forall \lambda : \ \expval{\hat{b}_y}_{\lambda}=\expval{\hat{b}_y} = 0$.}

{Now, if $A_1$ were to behave honestly with the strategy, $\mathcal{Q}_{iso}$, but imperfectly, i.e., with detection efficiency $\eta_{A_1}$, visibility $\nu_{A_1}$, they observe,
\begin{equation}
    {C_{A_1B}(\mathbf{p}_{(\bot \mapsto 1)})}=2\sqrt{2} \eta_{A_1}\nu_{A_1},
\end{equation}
    which violates \eqref{ClassicalCHSH}, and certifies loophole-free nonlocal correlations between $(A_1,B)$, for any non-zero values of $(\eta_{A_1},\nu_{A_1})$}

\end{proof}

Theorem \ref{IdealCase} brings forth the principal effect we employ to extend ideal loophole-free nonlocal correlations. However,
as perfect measurement devices close to the source are but a theoretical idealization, i.e., the pre-requisites of Theorem \ref{IdealCase}, namely, $\eta_{A_0}=\eta_B=\nu_0=1$, cannot be achieved in the actual experiments, we now consider cases wherein the devices close to source although better than the one placed further way, are not perfect.

\textbf{\emph{Imperfect devices close to the source:---}}  To account for such realistic cases and the estimation of the critical requirements, $(\eta^*_{A_1},\nu_{A_1}^*)$, of Alice's other measurement device, $A_1$, we present a trade-off specific to the CHSH inequality \eqref{CHSH} via the following Theorem, namely, the higher the loophole-free nonlocality witnessed close to source, the lower the threshold requirements of the measurement device placed away from the source. 

\begin{theorem}[\textbf{\emph{A specific analytical trade-off}}] \label{tradeoff}
If the measurement devices close to the source, $(A_0,B)$, witness loophole-free nonlocal correlations such that, $C_{A_0B}(\mathbf{p}_{(exp)}^{(\bot \mapsto 1)}) > 2$, then loophole-free nonlocal {(such that there does not exist any \emph{local hidden variable} explanation of the form \eqref{A1LHVGenBellExp2})} correlations between $(A_1,B)$ can be certified whenever the following inequality is violated,
\begin{equation} \label{tradeoffeq}
{C_{A_1B}(\mathbf{p}_{(exp)}^{(\bot \mapsto 1)})}\underrel{\mathcal{L}(A_1)}{\leqslant} \sqrt{8-\left(C_{A_0B}(\mathbf{p}_{(exp)}^{(\bot \mapsto 1)})\right)^2}.
\end{equation}
\end{theorem}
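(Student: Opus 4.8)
The plan is to generalize the argument of Theorem~\ref{IdealCase} by replacing the exact self-testing statement (which forced $\expval{\hat{b}_y}_\lambda = 0$ for every $\lambda$) with a quantitative bound on how strongly Bob's local marginals can be biased, given only that $(A_0,B)$ violate CHSH by the amount $C_{A_0B} > 2$. The key physical input remains the non-signaling/free-choice fact established in the preliminaries: whatever underlying quantum realization $\{\rho_\lambda, M^{y,\lambda}_b\}$ produces the observed $(A_0,B)$ statistics, Bob's reduced states $\{\rho^{(B)}_\lambda = \Tr_A \rho_\lambda\}$ and observables $\{M^{y,\lambda}_b\}$ are \emph{unchanged} when Alice switches to $A_1$. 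So if $A_1$ is classical (an LHV of the form~\eqref{A1LHVGenBellExp2}), then exactly as in~\eqref{ClassicalCHSH},
\begin{equation} \label{planA1}
C_{A_1B}(\mathbf{p}_{(exp)}^{(\bot \mapsto 1)}) \underrel{\mathcal{L}(A_1)}{=} \sum_{\lambda\in\Lambda} p(\lambda) \sum_{x,y} (-1)^{x\cdot y} \expval{\hat{a}_{(x,1)}}_\lambda \expval{\hat{b}_y}_\lambda.
\end{equation}
The difference from the ideal case is that now $\expval{\hat{b}_y}_\lambda$ need not vanish, so I must bound the right-hand side from above.

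First I would bound the $\lambda$-summand pointwise. Since $\expval{\hat{a}_{(x,1)}}_\lambda \in [-1,1]$, the inner double sum is maximized (over Alice's free classical responses) at $\sum_{x,y}(-1)^{x\cdot y}\expval{\hat{a}_{(x,1)}}_\lambda \expval{\hat{b}_y}_\lambda \le |\expval{\hat{b}_0}_\lambda| + |\expval{\hat{b}_1}_\lambda| \le \sqrt{2}\,\bigl(\expval{\hat{b}_0}_\lambda^2 + \expval{\hat{b}_1}_\lambda^2\bigr)^{1/2}$, where the last step is Cauchy--Schwarz. Thus $C_{A_1B} \le \sqrt{2}\sum_\lambda p(\lambda)\,\bigl(\expval{\hat{b}_0}_\lambda^2 + \expval{\hat{b}_1}_\lambda^2\bigr)^{1/2}$, and by concavity of the square root (Jensen), this is at most $\sqrt{2}\,\bigl(\sum_\lambda p(\lambda)(\expval{\hat{b}_0}_\lambda^2+\expval{\hat{b}_1}_\lambda^2)\bigr)^{1/2}$. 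It therefore suffices to control the averaged quantity $\sum_\lambda p(\lambda)(\expval{\hat{b}_0}_\lambda^2 + \expval{\hat{b}_1}_\lambda^2)$ in terms of $C_{A_0B}$.

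The crux is relating the bias of Bob's marginals to the CHSH value seen close to the source. Here I would invoke the structure each $\lambda$-block must have: because $(A_0,B)$ admits a quantum model, within each $\lambda$ Bob holds a genuine qubit (or, after a Jordan/direct-sum decomposition of the two binary observables $\hat{b}_0,\hat{b}_1$ into $2\times 2$ blocks, a sum of qubits) with an effective state $\rho^{(B)}_\lambda$ and two observables at some relative angle. The elementary two-qubit CHSH bound, sharpened to include the marginals, gives per block a relation of the Tsirelson--type $C_{A_0B,\lambda}^2 + (\text{marginal-bias term})^2 \le 8$; concretely, the maximal CHSH value attainable with a qubit whose state produces marginals $\expval{\hat{b}_0}_\lambda,\expval{\hat{b}_1}_\lambda$ is constrained so that $\bigl(C_{A_0B,\lambda}\bigr)^2 \le 8 - 2\bigl(\expval{\hat{b}_0}_\lambda^2 + \expval{\hat{b}_1}_\lambda^2\bigr)$. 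Averaging over $\lambda$ with Jensen (using convexity of the square on the left via $(\sum p_\lambda C_\lambda)^2 \le \sum p_\lambda C_\lambda^2$) converts this into $\bigl(C_{A_0B}\bigr)^2 \le 8 - 2\sum_\lambda p(\lambda)(\expval{\hat{b}_0}_\lambda^2 + \expval{\hat{b}_1}_\lambda^2)$, i.e. $\sum_\lambda p(\lambda)(\expval{\hat{b}_0}_\lambda^2+\expval{\hat{b}_1}_\lambda^2) \le \tfrac{1}{2}\bigl(8 - C_{A_0B}^2\bigr)$.

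Combining the two estimates yields $C_{A_1B} \le \sqrt{2}\cdot\sqrt{\tfrac{1}{2}(8 - C_{A_0B}^2)} = \sqrt{8 - C_{A_0B}^2}$, which is precisely~\eqref{tradeoffeq}; violating it rules out any LHV of the form~\eqref{A1LHVGenBellExp2} for $(A_1,B)$. I expect the main obstacle to be making the per-block CHSH-versus-marginal inequality $C_{A_0B,\lambda}^2 + 2(\expval{\hat{b}_0}_\lambda^2 + \expval{\hat{b}_1}_\lambda^2) \le 8$ fully rigorous for \emph{arbitrary} underlying Hilbert spaces rather than a single qubit: I would handle this by appealing to Jordan's lemma to block-diagonalize Bob's two observables into $2\times 2$ sectors and to the analogous (dichotomization) reduction on Alice's side, verifying that the marginal-penalized Tsirelson bound holds sector-by-sector and then survives the convex mixture over both the sectors and the hidden variable $\lambda$. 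The sharpness of the trade-off (equality for the tilted strategies found later) provides a useful consistency check that the constants in this penalized bound are the correct ones.
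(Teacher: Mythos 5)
Your overall architecture---factorize $C_{A_1B}$ under the hypothesis \eqref{A1LHVGenBellExp2}, bound it by Bob's per-$\lambda$ marginal biases, control those biases through a marginal-versus-CHSH tradeoff inherited from the $(A_0,B)$ statistics, and recombine with Jensen---is the same as the paper's. However, your first quantitative step is wrong. Maximizing $\expval{\hat{a}_{(0,1)}}_\lambda\left(\expval{\hat{b}_0}_\lambda+\expval{\hat{b}_1}_\lambda\right)+\expval{\hat{a}_{(1,1)}}_\lambda\left(\expval{\hat{b}_0}_\lambda-\expval{\hat{b}_1}_\lambda\right)$ over $\expval{\hat{a}_{(x,1)}}_\lambda\in[-1,1]$ gives $\left|\expval{\hat{b}_0}_\lambda+\expval{\hat{b}_1}_\lambda\right|+\left|\expval{\hat{b}_0}_\lambda-\expval{\hat{b}_1}_\lambda\right|=2\max_y\left|\expval{\hat{b}_y}_\lambda\right|$, which is \emph{not} bounded by $\left|\expval{\hat{b}_0}_\lambda\right|+\left|\expval{\hat{b}_1}_\lambda\right|$: take $\expval{\hat{b}_0}_\lambda=\beta>0$, $\expval{\hat{b}_1}_\lambda=0$, where the true maximum is $2\beta$ but your bound claims $\beta$. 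Consequently your intermediate inequality $C_{A_1B}\leq\sqrt{2}\left(\sum_\lambda p(\lambda)\left(\expval{\hat{b}_0}_\lambda^2+\expval{\hat{b}_1}_\lambda^2\right)\right)^{1/2}$ is false as a statement about models of the form \eqref{A1LHVGenBellExp2} (same counterexample: $C_{A_1B}=2\beta>\sqrt{2}\beta$). Your final constant comes out right only because this understatement is numerically cancelled by a second one: the penalized bound $C^2+2\left(\expval{\hat{b}_0}^2+\expval{\hat{b}_1}^2\right)\leq 8$ you invoke is true, but it is strictly weaker than what is available and exactly a factor too weak to survive the correction---after fixing the pointwise step, it only yields $C_{A_1B}\leq\sqrt{2}\sqrt{8-C_{A_0B}^2}$, missing \eqref{tradeoffeq} by $\sqrt{2}$. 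Two compensating misestimates do not make a proof.

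The repair is short and is precisely what the paper does. Keep the correct pointwise bound $C_{A_1B}\leq\sum_\lambda p(\lambda)\,2\max_y\left|\expval{\hat{b}_y}_\lambda\right|$, and control the maximum with the full strength of the Ac\'in--Massar--Pironio bound \eqref{BoundedGuess}, which for any quantum behavior reads $\left|\expval{\hat{b}_y}\right|\leq\frac{1}{2}\sqrt{8-C^2}$, i.e. $C^2+4\max_y\expval{\hat{b}_y}^2\leq 8$. Since each conditional $(A_0,B)$ behavior $\mathbf{p}_\lambda$ is quantum by \eqref{QuantumLHVGenBellExp}, this applies per $\lambda$ and gives $2\max_y\left|\expval{\hat{b}_y}_\lambda\right|\leq\sqrt{8-\left(C_{A_0B}(\mathbf{p}_\lambda)\right)^2}$; then concavity of $c\mapsto\sqrt{8-c^2}$ together with linearity, $\sum_\lambda p(\lambda)C_{A_0B}(\mathbf{p}_\lambda)=C_{A_0B}(\mathbf{p})$, yields \eqref{tradeoffeq}. (The paper organizes this same step through a four-way partition of $\Lambda$ fixing which signed marginal attains the maximum and applies \eqref{BoundedGuess} to subset-averaged behaviors, which are quantum by convexity of the quantum set; either bookkeeping works.) Your concern about Jordan's lemma is moot: the marginal-versus-CHSH tradeoff is already established for arbitrary-dimensional quantum behaviors in the cited reference, so no qubit block decomposition---and indeed no Cauchy--Schwarz detour at all---is needed.
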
 
\begin{proof}
Let us assume that the parties share an arbitrary behavior, $\mathbf{p} \in \mathbb{R}^{32}_{+}$. We now recall that if the measurement devices, $(A_0,B)$, witness the violation the CHSH inequality, $C_{A_0B}(\mathbf{p})>2$, then the outcomes of $B$ must be intrinsically random, which is to say, that they cannot be predicted perfectly \cite{RandomnessNonlocalityEntanglement}, specifically,
\begin{equation} \label{BoundedGuess}
 \forall y\in\{0,1\}: \  \left|\expval{\hat{b}_y}\right| \leq \frac{1}{2}{\sqrt{8-\left(C_{A_0B}(\mathbf{p})\right)^2}}.   
\end{equation}
Now, if Alice's other device, $A_1$, were to be classical, i.e., it obtains its measurement outcomes from the post-processing of a shared local hidden variable, $\lambda$, such that, the behavior witnessed by the measurement devices $(A_1,B)$ allows for a \emph{local hidden variable} explanation of the form \eqref{A1LHVGenBellExp2}, then the value of the CHSH expression can be bounded from above in the following way,
\begin{align} \label{midway}\nonumber
    C_{A_1B} &(\mathbf{p})\underrel{\mathcal{L}(A_1)}{=}  \sum_{\lambda\in\Lambda}p(\lambda)\sum_{x,y} (-1)^{x\cdot y}\expval{\hat{a}_{(x,1)}}_\lambda\expval{\hat{b}_y}_{\lambda}, \\ \nonumber
    & = \sum_{\lambda\in\Lambda}p(\lambda)\sum_x \expval{\hat{a}_{(x,1)}}_\lambda(\expval{\hat{b}_0}_{\lambda}+(-1)^x\expval{\hat{b}_1}_{\lambda}), \\ \nonumber
    & {\leqslant}  \sum_{\lambda\in\Lambda}p(\lambda)\left( \left|\expval{\hat{b}_0}_{\lambda}+\expval{\hat{b}_1}_{\lambda}\right|+\left|\expval{\hat{b}_0}_{\lambda}-\expval{\hat{b}_1}_{\lambda}\right|\right), \\ 
    & =2\sum_{\lambda\in\Lambda}p(\lambda)\max_{y\in\{0,1\}}\left|\expval{\hat{b}_y}_{\lambda}\right|
\end{align}
where the first inequality follows from the observation that $\left|\expval{\hat{a}_{(x,1)}}_\lambda\right|\leq 1$, for all $x$ and $\lambda\in\Lambda$. 

Let us now split the hidden variable state space, $\Lambda$, into the following four disjoint subspaces,
\begin{align} \nonumber
    \Lambda_0& \equiv\left\{\lambda\in\Lambda\Big|\max_{y\in\{0,1\}}\left|\expval{\hat{b}_y}_{\lambda}\right|=\expval{\hat{b}_0}_{\lambda}\right\}, \\ \nonumber
    \Lambda_1&\equiv\left\{\lambda\in\Lambda\Big|\max_{y\in\{0,1\}}\left|\expval{\hat{b}_y}_{\lambda}\right|=-\expval{\hat{b}_0}_{\lambda}\right\}, \\ \nonumber
    \Lambda_2&\equiv\left\{\lambda\in\Lambda\Big|\max_{y\in\{0,1\}}\left|\expval{\hat{b}_y}_{\lambda}\right|=\expval{\hat{b}_1}_{\lambda}\right\}, \\
    \Lambda_3&\equiv\left\{\lambda\in\Lambda\Big|\max_{y\in\{0,1\}}\left|\expval{\hat{b}_y}_{\lambda}\right|=-\expval{\hat{b}_1}_{\lambda}\right\}.
\end{align} 
This allows us expand the RHS of \eqref{midway} such that,
\begin{align} \label{midway2}\nonumber
    C_{A_1B} (\mathbf{p})&\underrel{\mathcal{L}(A_1)}{\leq} 2\Big(\sum_{\lambda\in\Lambda_0}p(\lambda)\expval{\hat{b}_0}_{\lambda} - \sum_{\lambda\in\Lambda_1}p(\lambda)\expval{\hat{b}_0}_{\lambda}  \\ \nonumber
    & \hspace{0.1in} + \sum_{\lambda\in\Lambda_2}p(\lambda)\expval{\hat{b}_1}_{\lambda} - \sum_{\lambda\in\Lambda_3}p(\lambda)\expval{\hat{b}_1}_{\lambda} \Big) \\ \nonumber
    & = 2\Big(p(\Lambda_0)\left|\expval{\hat{b}_0}_{\Lambda_0}\right|+p(\Lambda_1)\left|\expval{\hat{b}_0}_{\Lambda_1}\right| \\
    & \hspace{0.1in} + p(\Lambda_2)\left|\expval{\hat{b}_1}_{\Lambda_2}\right| + p(\Lambda_3)\left|\expval{\hat{b}_1}_{\Lambda_3}\right|\Big),
\end{align}
where $p(\Lambda_j)=\sum_{\lambda\in\Lambda_j}p(\lambda)$, and $\expval{\hat{b}_k}_{\Lambda_j}=\frac{1}{p(\Lambda_j)} \sum_{\lambda\in\Lambda_j}p(\lambda)\expval{\hat{b}_k}_{\lambda}$ for all $j\in\{0,1,2,3\}$ and $k\in\{0,1\}$. Now, we can use \eqref{BoundedGuess} to further upper bound \eqref{midway2}, such that,
\begin{align} \label{midway3} \nonumber
    C_{A_1B} (\mathbf{p})&\underrel{\mathcal{L}(A_1)}{\leq}\sum_{j\in\{0,1,2,3\}}  p(\Lambda_j)\sqrt{8-(C_{A_0B}(\mathbf{p}_{\Lambda_j}))^2}, \\ \nonumber
    & \leq \sqrt{8-\left(\sum_{j\in\{0,1,2,3\}}  p(\Lambda_j)C_{A_0B}(\mathbf{p}_{\Lambda_j})\right)^2}, \\ 
    & = \sqrt{8-\left(C_{A_0B}(\mathbf{p})\right)^2}.
\end{align}
where $\mathbf{p}_{\Lambda_j}=\frac{1}{p(\Lambda_j)}\sum_{\lambda\in\Lambda_j}p(\lambda)\mathbf{p}_{\lambda}$, and $\mathbf{p}_{\lambda} \in \mathbb{R}^{32}_+$ is the joint behavior for a specific $\lambda$, the second inequality follows from the fact that the function, $\sqrt{8-(C_{A_0B}(\mathbf{p}_{\Lambda_j}))^2}$, is concave in its argument, $C_{A_0B}(\mathbf{p}_{\Lambda_j})$, and the final equality follows from the operational requirement that averaging $\mathbf{p}_{\Lambda_j}$ over our ignorance of the hidden variable $\lambda$ reproduces the observed behavior, $\mathbf{p}$, i.e., $\mathbf{p}=\sum_{j}p(\Lambda_j)\mathbf{p}_{\Lambda_j}$, and
$C_{A_0B}(\mathbf{p})=\sum_{j}p(\Lambda_j) C_{A_0B}(\mathbf{p}_{\Lambda_j})$. 

Finally, plugging the precondition that the parties observe the post-processed experimental behavior, $\mathbf{p}\equiv \mathbf{p}_{(exp)}^{(\bot \mapsto 1)}$, into \eqref{midway3} yields \eqref{tradeoffeq}.

\end{proof}%

As $\sqrt{8-\left(C_{A_0B}(\mathbf{p}_{(exp)}^{(\bot \mapsto 1)})\right)^2}$ is a monotonically decreasing function of $C_{A_0B}(\mathbf{p}_{(exp)}^{(\bot \mapsto 1)}) \in (2,2\sqrt{2}]$, the inequality \eqref{tradeoffeq} implies that any amount of loophole-free violation of the CHSH inequality \eqref{CHSH} witnessed by $(A_0,B)$, reduces the threshold value of the CHSH expression for $(A_1,B)$. Specifically, when $(A_0,B)$ witness maximally nonlocal correlations, $C_{A_0B}(\mathbf{p}_{(exp)}^{(\bot \mapsto 1)})=2\sqrt{2}$, the threshold value reduces to \emph{zero} \eqref{ClassicalCHSH}, whereas when $(A_0,B)$ fail to violate the CHSH inequality, $(A_1,B)$ must violate the CHSH inequality \eqref{CHSH} on their own, to certify loophole-free nonlocal correlations. 

In what follows, we use Theorem \ref{tradeoff}, as a convenient tool to estimate the critical parameters, $(\eta_{A_1}^*,\nu_{A_1}^*)$, for the loophole-free certification of nonlocal correlations between $(A_1,B)$. 

\textbf{\emph{Analytical estimation of critical parameters:---}} Given a quantum strategy, $\mathcal{Q}$, and the tuple of experimental parameters, $(\eta_B,\eta_{A_0},\nu_{B},\nu_{A_0})$, we retrieve the experimental behavior, $\mathbf{p}_{(exp)}^{(\bot \mapsto 1)}$, as a function of $(\eta_{A_1},\nu_{A_1})$. Now, the critical parameters, $(\eta_{A_1}^*,\nu_1^*)$, are simply the ones for which the inequality \eqref{tradeoffeq} is saturated. 

To demonstrate our methodology, we first consider the asymmetric case wherein, $B$ is placed extremely close to a perfect source, such that $\eta_B=1$, $\eta_{A_0}=\eta$, and all devices have perfect visibility, $\nu_B=\nu_{A_0}=\nu_{A_1}=1$. The parties employ the maximally nonlocal isotropic strategy, $\mathcal{Q}_{iso}$, such that, $C_{A_0B}(\mathbf{p}_{(exp)}^{(\bot \mapsto 1)})=2\sqrt{2}\eta$, where $(A_0,B)$ observe a loophole-hole violation for $\eta\in(\frac{1}{\sqrt{2}},1]$, and $C_{A_1B}(\mathbf{p}_{(exp)}^{(\bot \mapsto 1)})=2\sqrt{2}\eta_{A_1}$, which when plugged into \eqref{tradeoffeq} yields, $\eta^*_{A_1} = \sqrt{1-\eta^2}$. This relation is plotted in FIG. \ref{IsoVersusTiltedFig}, serves to bring up the central insight that the closer to the source $A_0$ is placed, the higher will be its effective detection efficiency, $\eta_{A_0}$, and consequently, the lower will be the critical detection efficiency of $A_1$, $\eta^*_{A_1}$, i.e., the further away $A_1$ can be placed from the source, whilst retaining loophole-free nonlocal correlations with $B$. 

For the symmetric case, wherein $(A_0,B)$ are equidistant from the source, such that, $\eta_B=\eta_{A_0}=\eta$, and all devices have perfect visibility, $\nu_B=\nu_{A_0}=\nu_{A_1}=1$, if the parties employ the isotropic strategy, $\mathcal{Q}_{iso}$, then $C_{A_0B}(\mathbf{p}_{(exp)}^{(\bot \mapsto 1)})=2\sqrt{2}\eta^2+2(1-\eta)^2$, where $(A_0,B)$ observe a loophole-hole violation for $\eta \in(\frac{2}{1+\sqrt{2}},1]$ and $C_{A_1B}(\mathbf{p}_{(exp)}^{(\bot \mapsto 1)})=2\sqrt{2}\eta_{A_1}\eta +  2(1-\eta_{A_1})(1-\eta)$. We plot the values of $\eta^*_{A_1}$, which saturate \eqref{tradeoffeq} against the effective detection efficiency $\eta$ in FIG. \ref{IsoVersusTiltedFig}.

\begin{figure}[t]\centering
 \includegraphics[width=\columnwidth]{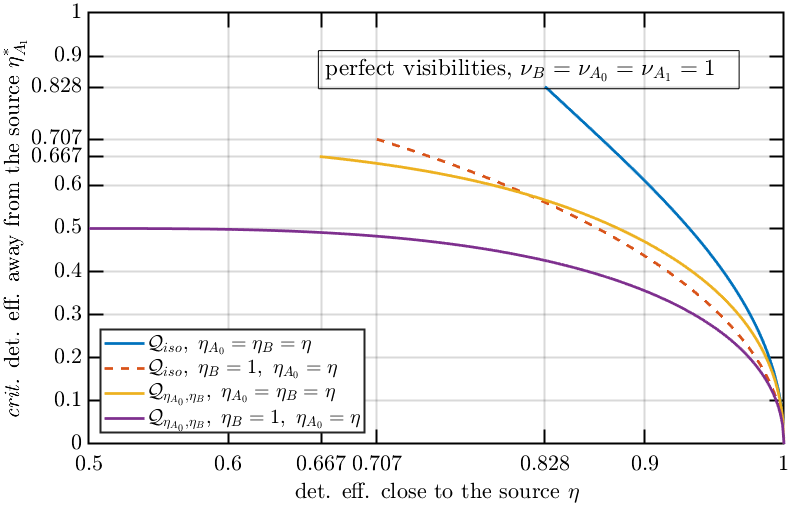}
 \caption{\label{IsoVersusTiltedFig} \textbf{\emph{Isotropic versus tilted strategies:---}} The critical detection efficiency, $\eta^*_{A_1}$, of Alice's measurement device placed away from the source, $A_1$, versus the effective detection efficiency, $\eta$, close to the source, obtained with the analytical trade-off \eqref{tradeoffeq}, when the party's employ the \emph{isotropic} strategy, $\mathcal{Q}_{iso}$, for the symmetric case, $\eta_{A_0}=\eta_B=1$ (top solid blue curve), and for the asymmetric case, $\eta_{A_0}=\eta$, $\eta_B=1$ (dashed orange curve), and when the party's employ the \emph{tilted} strategies, $\mathcal{Q}_{\eta_{A_0},\eta_B}$, for the symmetric case (middle solid yellow curve), and for the asymmetric case (bottom solid purple curve), with perfect visibilities, $\nu_B=\nu_{A_0}=\nu_{A_1}=1$. The critical detection efficiency starts declining after $\eta$ exceeds the respective threshold values: $\eta=\frac{2}{1+\sqrt{2}}\approx0.828$, $\eta=\frac{1}{\sqrt{2}}\approx 0.707$, for the symmetric, and the asymmetric cases, respectively, when the parties employ the isotropic strategy, $\mathcal{Q}_{iso}$, and $\eta=\frac{2}{3}\approx0.67$, $\eta=\frac{1}{2}$, for the symmetric, and the asymmetric cases, when the parties use the tilted strategies, $\mathcal{Q}_{\eta_{A_0},\eta_B}$, respectively. In both symmetric and asymmetric cases, the tilted strategies, $\mathcal{Q}_{\eta_{A_0},\eta_B}$, perform better than the isotropic strategy, $\mathcal{Q}_{iso}$, at minimizing the critical detection efficiency, $\eta^*_{A_1}$.} 
\end{figure}

Up till this point, we have relied exclusively on the isotropic strategy, $\mathcal{Q}_{iso}$. Next, we address whether we can further lower the critical requirements of $A_1$ by employing better quantum strategies.

\textbf{\emph{Optimal quantum strategies:---}} Given the effective detection efficiencies of the measurement devices close to the source, $(\eta_{A_0},\eta_B)$, the parties now use the \emph{tilted strategies}, $\mathcal{Q}_{\eta_{A_0},\eta_B}$, which attain maximum quantum violation of the following tilted CHSH inequality,

\begin{align}\nonumber \label{tilted}
     C^{\eta_{A_0},\eta_B}_{A_0B}(\mathbf{p}) &\equiv  \eta_{A_0}\eta_B  C_{A_0B}(\mathbf{p}) +2(1-\eta_{B})\eta_{A_0}\expval{\hat{a}_{(0,0)}} \\ \nonumber &\hspace{0.1in}+2(1-\eta_{A_0})\eta_B\expval{\hat{b}_{0}}+2(1-\eta_{A_0})(1-\eta_{B}) \\ &\underrel{\mathcal{L}}{\leqslant} 2.
\end{align}
Notice that, given the detection efficiencies, $(\eta_{A_0},\eta_B)$, the strategy, $\mathcal{Q}_{\eta_{A_0},\eta_B}$, attains the maximum loophole-free violation of the CHSH inequality \eqref{CHSH} for $(A_0,B)$, as $C_{A_0B}(\mathbf{p}_{(exp)}^{(\bot \mapsto 1)})=C^{\eta_{A_0},\eta_B}_{A_0B}(\mathbf{p}_{\eta_{A_0},\eta_B})$, where $\mathbf{p}_{\eta_{A_0},\eta_B}\in \mathbb{R}^{32}_{+}$ is the ideal experimental behavior corresponding to the quantum strategy, $\mathcal{Q}_{\eta_{A_0},\eta_B}$. As the threshold value for loophole-free certification of the nonlocal behavior of $A_1$, $\sqrt{8-\left(C_{A_0B}(\mathbf{p}_{(exp)}^{(\bot \mapsto 1)})\right)^2}$ \eqref{tradeoffeq}, is a monotonically decreasing function of $C_{A_0B}(\mathbf{p}_{(exp)}^{(\bot \mapsto 1)})$, the tilted strategies, $\mathcal{Q}_{\eta_{A_0},\eta_B}$, minimize it, and hence, are \emph{optimal} for lowering the critical requirements of $A_1$. In general, one can use the heuristic see-saw semi-definite programming method to numerically find these strategies. In an upcoming work, we describe a much more precise method for obtaining these strategies based on self-testing of the tilted CHSH inequalities \eqref{tilted}. In FIG. \ref{IsoVersusTiltedFig} we plot the resultant curves for $\eta^*_{A_1}$ against the effective detection efficiency, $\eta$, for the asymmetric case, ($\eta_{A_0}=\eta,\eta_B=1$, as well as for the symmetric case, $\eta_{A_0}=\eta_B=\eta$,  demonstrating the advantage of the tilted strategies, $\mathcal{Q}_{\eta_{A_0},\eta_B}$, over the isotropic strategy, $\mathcal{Q}_{iso}$. 

\begin{figure}[t!]\centering
 \includegraphics[width=\columnwidth]{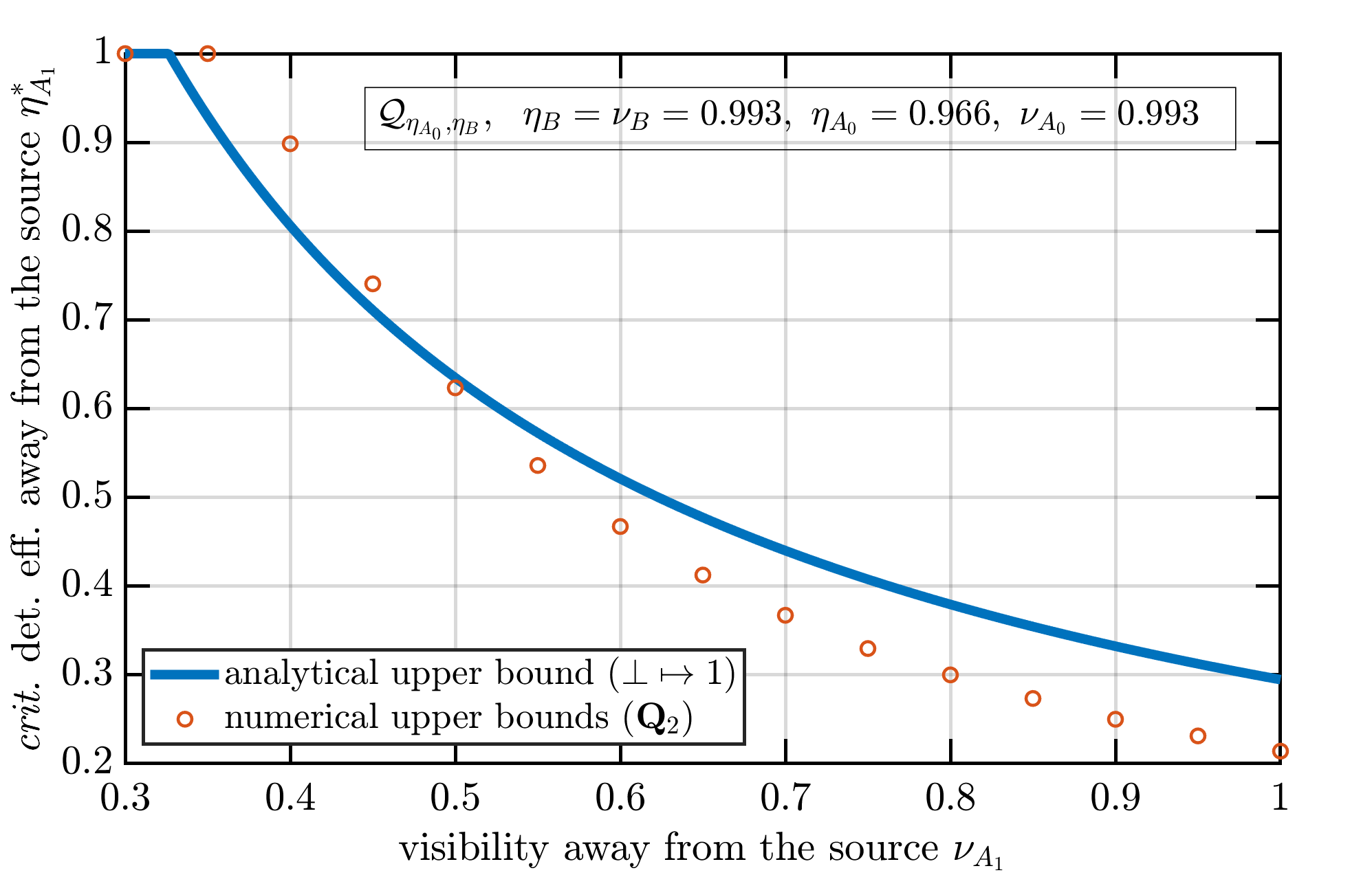}
  \caption{\label{TwoOutcomeVsThreeOutcomeFig} \textbf{\emph{Analytical versus numerical estimation:---}} Upper bounds on the critical detection efficiency, $\eta^*_{A_1}$, of Alice's measurement device placed away from the source, versus the effective visibility between $(A_1,B)$, $\nu_{A_1}$, calculated analytically using \eqref{tradeoffeq}, when the parties use the tilted strategy, $\mathcal{Q}_{\eta_{A_0},\eta_B}$, where $\eta_B=\nu_B=0.993$, $\eta_{A_0}=0.966,\nu_{A_0}=0.993$, and the assignment strategy, $\bot\mapsto+1$ (solid blue curve), and numerically using certifiable randomness as a measure for the nonlocal behavior of ${A_1}$, and the second level $(\mathbf{Q}_2)$ of \emph{Nieto-Silleras} hierarchy of semi-definite programs, while keeping the ``no-click" event as an additional outcome, $\tilde{a},\tilde{b}\in\{\pm1,\bot\}$ (orange circles). The curves demonstrate the advantage of keeping the ``no-click" as an additional outcome, $\tilde{a},\tilde{b}\in\{\pm1,\bot\}$ over the post-processing strategy of assigning a valid outcome to the ``no-click" event, $\bot \mapsto +1$. }
\end{figure}
While Theorem \ref{tradeoff} unitizes the value of the CHSH expression as a measure of nonlocal correlations shared between $(A_1,B)$, and provides a convenient method for the estimation of critical parameters of $A_1$, its applicability is limited to the simplest Bell scenario, as well as to the choice of the post-processing strategy, specifically, assigning a pre-determined valid outcome whenever the measurement device fails to detect the incoming quantum system (the ``no-click" event). We now describe a much more broadly applicable numerical method for the estimation of the critical parameters of individual measurement devices in generic network Bell scenarios.

{\textbf{\emph{A versatile numerical tool:---}}} For generic Bell scenarios, a more versatile measure of the nonlocal behavior of an individual measurement device, for instance, $A_1$, which takes into the raw three-outcome experimental behavior, $\mathbf{p}_{(exp)}$, is the amount of \emph{certifiable randomness}, $H_{min}(G_{x}(\mathbf{p}_{(exp)}))$, where $H_{min}(\cdot)$ is the min-entropy, and $G_{x}(\mathbf{p}_{(exp)})$ is maximum guessing probability \cite{Bancal_2014}, 
\begin{equation} \label{eq:program}
\begin{split}
G_{x}(\mathbf{p}_{(exp)}) = \underset{\mathbf{p}_{\tilde{a}}}{\ \ \max\ \ } & \sum_{\tilde{a}\in\{\pm1,\bot\}} p_{\tilde{a}}(\tilde{a}|x,1)\\
\text{s.t.\ \ } &  \sum_{\tilde{a}\in\{\pm1,\bot\}} \mathbf{p}_{\tilde{a}}= \mathbf{p}_{(exp)}\\
&  \forall \tilde{a}\in\{\pm1,\bot\}: \ \mathbf{p}_{\tilde{a}}\in \mathbf{Q},
\end{split}
\end{equation}
where $p_{\tilde{a}}(\tilde{a}|x,1)=\sum_{\tilde{b}\in\{\pm1,\bot\}}p_{\tilde{a}}(\tilde{a},\tilde{b}|(x,1),y)$, $\{\mathbf{p}_{\tilde{a}}\in\mathbb{R}^{72}_+\}$ are convex decompositions of the raw experimental behavior $\mathbf{p}_{(exp)}\in\mathbb{R}^{72}_+$, wherein we have absorbed the convex coefficients into the respective decompositions, and $\mathbf{Q}$ is the convex set of quantum behaviors \footnote{All behaviors realizable with quantum resources, specifically, by performing local measurements on a shared entangled state.}, $\mathbf{p}\equiv \{p(\tilde{a},\tilde{b}|(x,i),y)\}\in\mathbb{R}^{72}_+$. 

Observe that, for any behavior possessing a local hidden variable explanation for $A_1$ of the form \eqref{A1LHVGenBellExp2}, we can, without loss of generality, take the response schemes to be deterministic, i.e., $p^{A}_\lambda(a|x,i=1)\in\{0,1\}$. This observation in-turn implies that for an experimental behavior $\mathbf{p}_{(exp)}$ possessing a local hidden variable explanation for Alice's device $A_1$ similar to \eqref{A1LHVGenBellExp2}, the outcomes of $A_1$ can always be perfectly guessed if one has access to the local hidden variable $\lambda$, such that $G_{x}(\mathbf{p}_{(exp)})=1$. On the contrary, if the outcomes of $A_1$ cannot be guessed perfectly (prior to the specific round of the experiment), i.e., the guessing probability optimization yields $G_{x}(\mathbf{p}_{(exp)})<1$, no local hidden variable explanation for the experimental behavior, similar to \eqref{A1LHVGenBellExp2}, exists. Consequently, the critical parameters for witnessing loophole-free nonlocality, $(\eta^*_{A_1},\nu^*_{A_1})$, correspond to the threshold values of $(\eta_{A_1},\nu_{A_1})$, for which, $G_{x}(\mathbf{p}_{(exp)}) < 1$.

This measure is particularly well-suited for generic network Bell scenarios entailing many spatially separated parties, wherein we are interested in the certification of the nonlocal behavior of individual measurement devices when other measurement devices may or may not be already witnessing loophole-free nonlocality. Moreover, apart from the direct relevance to DI randomness certification, $H_{min}(G_{x}(\mathbf{p}_{(exp)}))$, constitutes a crucial ingredient in the security proofs of more complex DI cryptography protocols. 
 \begin{figure}\centering
 \includegraphics[width=\columnwidth]{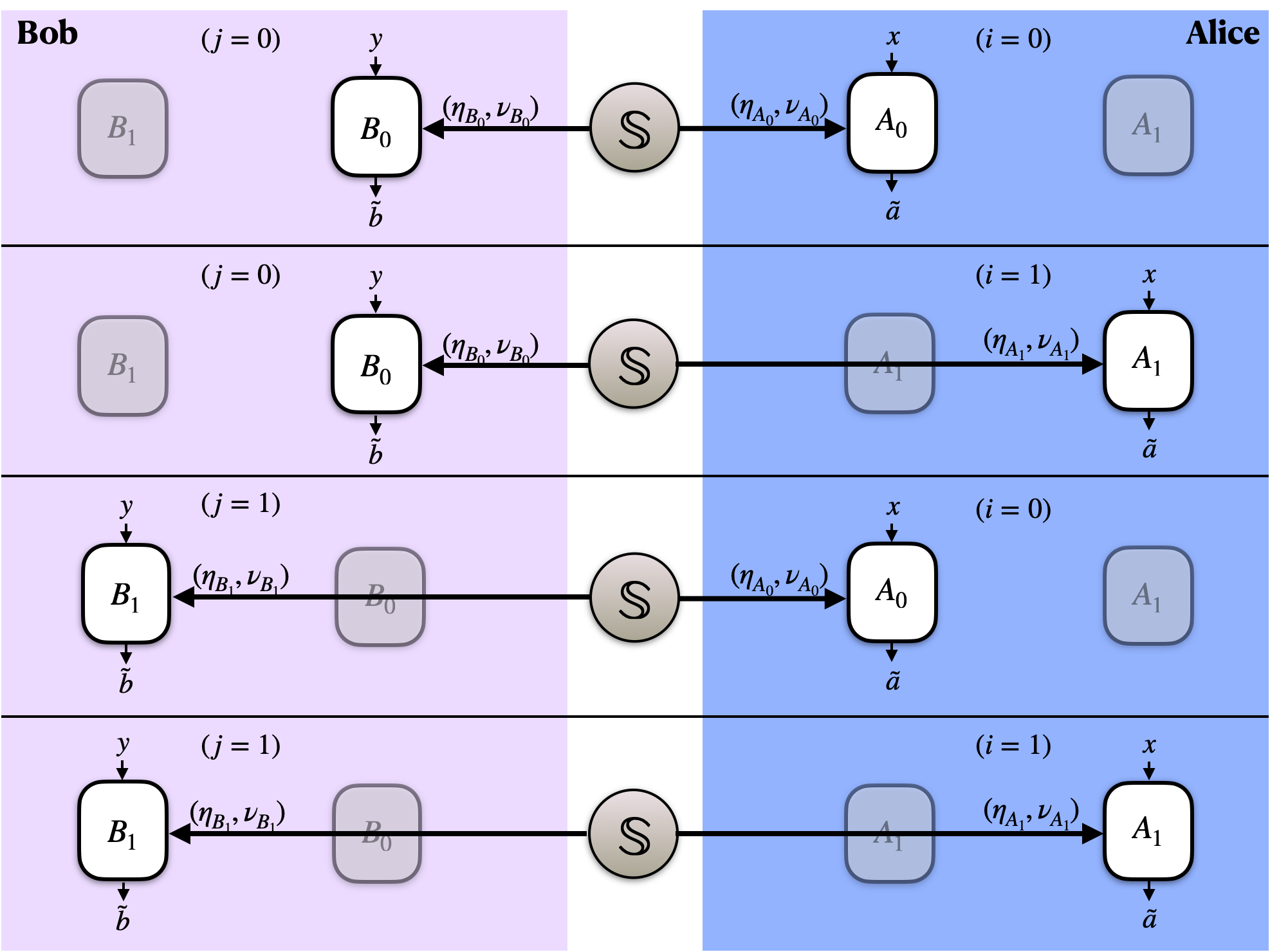}
 \caption{\label{TwoAliceTwoBob} \textbf{\emph{Symmetric generalization:---}} The graphic is a schematic depiction of the next generalization to the generalized or routed Bell experiment depicted in FIG. \ref{Topology}. Along with Alice, in each round of the experiment, Bob chooses the location of his measurement device, $B_j$, based on a randomly chosen input bit, $j\in\{0,1\}$. When $j=0$, Bob places his measurement device, $B_0$, close to the source with an effective detection efficiency and visibility, $(\eta_{B_0},\nu_{B_0})$. When $j=1$, he places her measurement device, $B_1$, further away from the source, with a lower detection efficiency and visibility, $(\eta_{B_1},\nu_{B_1})$, such that $\eta_{B_1}<\eta_{B_0}$, $\nu_{B_1}<\nu_{B_0}$. Our results imply that in rounds with $i=j=1$, loophole-free nonlocal correlations can be certified between the measurement devices, $(A_1, B_1)$, placed arbitrarily far away from each other.}
 \end{figure}
 
The optimization problem \eqref{eq:program} is extremely arduous to solve because of the constraint $\mathbf{p}_{\tilde{a}}\in \mathbf{Q}$. Instead, to retrieve progressively tightening upper-bounds on the maximum guessing probability, $G_{x}(\mathbf{p}_{(exp)})$, and the estimation of the critical parameters $(\eta^*_{A_1},\nu^*_{A_1})$, we employ the \emph{Nieto-Silleras} hierarchy of semi-definite programs \cite{Nieto_Silleras_2014,Bancal_2014}, which relaxes the constraint $ \mathbf{p}_{\tilde{a}}\in \mathbf{Q}$, to $\mathbf{p}_{\tilde{a}}\in \mathbf{Q}_L$, where $\mathbf{Q}_L$ are the convex relaxations of the quantum set, $\mathbf{Q}$, corresponding to \emph{Navascu\'es--Pironio--Ac\'in hierarchy}, $L\in\mathbb{N}_+$ denotes the level of the relaxation, such that, $\mathbf{Q}\subset\mathbf{Q}_{L+1}\subset\mathbf{Q}_{L}$ for all $L\in\mathbb{N}_+$, and $\lim_{L\to\infty}\mathbf{Q}_L=\mathbf{Q}$ \cite{Navascues_2008}. In FIG. \ref{TwoOutcomeVsThreeOutcomeFig} we plot the critical detection efficiency of $A_1$, $\eta^*_{A_1}$ versus its visibility $\nu_{A_1}$, for an experimentally relevant case, while keeping the ``no-click" as an additional outcome, $\tilde{a},\tilde{b}\in\{\pm1,\bot\}$, and using the second level of the \emph{Nieto-Silleras} hierarchy, i.e., with the relaxed constraint $\mathbf{p}_{\tilde{a}}\in \mathbf{Q}_2$. In FIG. \ref{TwoOutcomeVsThreeOutcomeFig}, we also plot the corresponding curve retrieved analytically using \eqref{tradeoffeq} and the assignment strategy, $\bot\mapsto+1$. The plot serves to demonstrate the advantage of the numerical technique, and in particular, of keeping the entire raw experimental behavior over the analytical technique and the post-processing strategy of assigning a valid outcome to the ``no-click" event.

\textbf{\emph{Discussion:---}} In this work, we formulated a novel scheme to overcome the most significant impediment in realizing long-distance loophole-free nonlocal correlations, namely, the detection loophole. Our scheme exploits the properties of short-distance loophole-free nonlocal correlations, which can be readily attained in present-day Bell experiments to extend them to longer distances. Specifically, 
we considered Bell experiments wherein the involved parties randomly choose the location of their measurement devices in each round. To demonstrate the considered effect, we stuck to the most straightforward generalization of the Bell-CHSH experiment, wherein only Alice randomly chooses the location of her measurement devices based on an input $i\in\{0,1\}$. However, our approach can be applied to more complex scenarios as well. For instance, consider the case wherein, along with Alice, Bob randomly chooses the location of his measurement device, $B_j$, based on an input $j\in\{0,1\}$, in each round of the experiment (FIG \ref{TwoAliceTwoBob}). When $i=j=0$, both parties place their devices close to the source and witness strong loophole-free nonlocal correlations. Consequently, in the rounds when a party places their measurement device close to the source while the other places their device away from the source, i.e. when either $i=0,j=1$ or $i=1,j=0$, loophole-free nonlocal correlations can be certified at arbitrarily low detection efficiency and visibility. This observation, in turn, enables the certification of loophole-free nonlocal correlations between measurement devices placed arbitrarily far away from the source and each other when $i=j=1$. 

Physically moving the measurement device during the experiment is arduous and impractical. Therefore, experiments aimed at showcasing the considered effect could employ \emph{relay-switches}, which alter the path of the quantum system transmitted from the source actively, which is to say, based on inputs from the involved parties, in each round of the experiment. We call such Bell experiments ``routed" Bell experiments, specifically for the simple case depicted in FIG. \ref{Topology}, such a relay switch can be placed between the source and the measurement device $A_1$. In each round of the experiment, Alice transmits her choice of $i\in\{0,1\}$ to the relay switch, which then transmits the quantum system from the source to either the measurement device, $A_0$, placed close to the source or $A_1$, placed further away, based on the input from Alice, $i$. 

The generalized or routed Bell experiments introduced in this work are closely related to the EPR steering scenarios \cite{Steering}; however, the ``trust structure" is different. Specifically, in bipartite steering scenarios, the  measurement device of one of the spatially separated parties, referred to as the ``steering party", is completely characterized or ``trusted." On the other hand, our treatment of the generalized or routed Bell experiments is completely device-independent, i.e., all devices along with the source remain completely uncharacterized or ``untrusted". We use the fact that the devices close to the source witness nonlocal correlations, specifically, a violation of CHSH inequality, to characterize Bob measurement device $B$ subsequently. We then use this characterization of Bob's measurement device to derive operational consequences of local hidden variable models for $A_1$, such as \eqref{tradeoffeq}. Finally, we demonstrate the operational quantum violation of these consequences.  

Besides increasing the distance over which loophole-free nonlocal correlations can be sustained, our scheme enables the certification of nonlocal behavior of off-the-shelf measurement devices. Finally, it follows from the proof of Theorem \ref{IdealCase} that the parties can extend extremal loophole-free nonlocal correlations to an arbitrarily large number of additional measurement devices placed away from the source. These observations together enable novel applications, such as a central hub equipped with expensive state-of-the-art measurement apparatus witnessing strong loophole-free nonlocal correlations and distributing them to an arbitrary number of remotely located commercial off-the-shelf measurement devices, making loophole-free nonlocality much more broadly accessible. To summarize, we anticipate our findings to significantly accelerate the advent of DI information processing as a near-term commonplace technology.

Apart from the direct application of the routed Bell experiments introduced here to Device Independent Randomness Certification with very inefficient detectors, the fact that \emph{local hidden variable} explanations of the form \eqref{LHVGenBellExp},\eqref{A1LHVGenBellExp2} are isomorphic to quantum explanations entailing local measurements on separable states implies that the violation of their operational consequences, such as the violation of the trade-off \eqref{tradeoffeq}, in routed Bell experiments can enable Device Independent Entanglement Certification and Quantification \cite{DIEQ} at arbitrarily large distances. Moreover, the scheme proposed in this work can implement DIQKD protocols, enabling remote measurement devices to share a secret key \footnote{A notable DIQKD protocol involving local Bell tests to avoid the detection loophole was proposed in \cite{PhysRevX.3.031006}. However, the protocol utilized an intricate entanglement swapping protocol as a sub-routine, making it vulnerable to noise.}. Although naive DIQKD protocols based on this scheme are secure against passive source-based attacks and eavesdroppers placed between the source and Alice's closest measurement device $A_0$, they are susceptible to active device-controlling attacks by eavesdroppers placed between Alice's measurement devices $(A_0, A_1)$. Specifically, in the Appendix, we demonstrate that for a measurement device with $N$ inputs and $N'$ outputs, there exist two distinct active attacks which render any DIQKD protocol insecure whenever the detection efficiency of such a measurement device is below $\min\{\frac{1}{N},\frac{1}{N'}\}$. In our set-up this translates to a minimum detection efficiency of $\frac{1}{2}$ for all measurement devices for completely secure DIQKD. Therefore, a crucial open question remains whether the generalized or routed Bell experiments introduced in this work can be used to implement improved DIQKD protocols with lower critical detection efficiency and visibility requirements than standard DIQKD protocols. 

{Furthermore, while the bipartite quantum strategies considered here are well suited to the routed Bell experiments, tripartite no-signaling quantum strategies are an interesting alternative natural generalization to the local hidden variable models described in \eqref{A1LHVGenBellExp2}. Specifically, probability distributions explainable via such local hidden variable models can be interpreted as those stemming from local measurements performed on a shared tripartite classical-quantum-quantum state. It then follows that the most obvious modeling of the fully quantum case is promoting the classical register of Alice's distant device to a quantum register such that the devices measure a tripartite quantum-quantum-quantum state. However, such tripartite quantum strategies have an additional implicit no-signaling constraint between Alice's devices compared to the bipartite quantum strategies considered in the article. Consequently, even the most straightforward experimentally implementable quantum strategy for the generalized or routed Bell experiments wherein the source shares (potentially noisy) maximally entangled states to the devices cannot be accounted for by such tripartite quantum strategies, nor could such strategies violate the trade-off \eqref{tradeoffeq}. Nevertheless, it is interesting to explore in more detail the role of tripartite quantum strategies, especially strategies involving genuine tripartite entanglement in routed Bell experiments.} Finally, it would be interesting to replace the guessing probability maximization \eqref{eq:program} with the minimization of conditional von Neumann entropies \cite{Brown2021} and investigate the behavior in the context of the generalized or routed Bell experiments introduced in this work.

\section*{Note Added}
After completion and communication of the first version of our manuscript we were informed of an independent work \cite{lobo2023certifying} which resolves the open question of enabling DIQKD protocols based on the routed Bell experiments introduced in this article with lower critical requirements than standard DIQKD schemes. In particular, the novel scheme employs jointly-measurable measurements for Alice's device $A_1$ as classical models, instead of the \emph{local hidden variable} models invoked here.

\begin{acknowledgments}
We are grateful to Ekta Panwar, Nicolás Gigena, and Piotr Mironowicz, Jef Pauwels, Tam\'{a}s V\'{e}rtesi, Renato Renner, Stefano Pironio \& Marek \.{Z}ukowski for enlightening discussions. This work was partially supported by the Foundation for Polish Science (IRAP project, ICTQT, contract No. MAB/218/5, co-financed by EU within the Smart Growth Operational Programme) and partially supported by the Institute of Information \& Communications Technology Planning \& Evaluation (IITP) grant funded by the Korean government (MSIT) (No.2022- 0-00463, Development of a quantum repeater in optical fiber networks for quantum internet). MP also acknowledges financial support from QuantERA, an ERA-Net co-fund in Quantum Technologies (www.quantera.eu), under project eDICT (contract No. Quantera/2/2020).
The numerical optimization was carried out using \href{https://ncpol2sdpa.readthedocs.io/en/stable/index.html}{Ncpol2sdpa} \cite{wittek2015algorithm}, \href{https://yalmip.github.io/}{YALMIP} \cite{Lofberg2004}, and \href{https://www.mosek.com/documentation/}{MOSEK} \cite{mosek}.
\end{acknowledgments}

\
\clearpage

\appendix
\onecolumngrid
\section{Active attacks on DIQKD}
In this section, building up on the results contained in \cite{MassarPhysRevA.68.062109,Tomamichel2012}, we retrieve a generic lower bound on the threshold detection efficiency $\eta>\min\{\frac{1}{N},\frac{1}{N'}\}$ of any spatially separated measurement device $N$ settings and $N'$ outcomes for the security of any DIQKD protocol. To this end, we describe the following two distinct active attacks,

\begin{attack}[\textbf{\emph{First lower bound on detector efficiency for QKD}}] \label{bound1}
Let the Bell experiment involve two parties: Alice with $N$ choices of setting $x\in[N]$ and $N'$ outcome alphabet $a\in[N']$, and Bob with $M$ settings $y\in[M]$ and $M'$ outcomes $b\in[M']$. There exists an active attack which allows the eavesdropper to reproduce any no-signalling probability distribution as long as the detector efficiencies satisfy $\eta_A\leq \frac{1}{N}$ and $\eta_B\leq \frac{1}{M}$.
\end{attack} 
\begin{proof}
Let the probability distribution that the honest, non-local devices would observe with perfect detectors be $p_{NL}(a,b|x,y)$, where $a\in[N']$ and $b\in[M']$ are the outcomes of Alice's and Bob's measurements and $x\in[N]$ and $y\in[M]$ their respective settings. Taking into the account detectors' inefficiencies the observed probability distribution is:
\begin{align} \nonumber
p(a,b|x,y)=\eta_A \eta_B p_{NL}(a,b|x,y)
\\ \nonumber
p(a,\bot |x,y)=\eta_A (1-\eta_B) p^{(A)}_{NL}(a|x)
\\ \nonumber
p(\bot,b |x,y)=(1-\eta_A) \eta_B p_{NL}^{(B)}(b|y)
\\ \label{distribution}
p(\bot,\bot |x,y)=(1-\eta_A) (1-\eta_B),
\end{align}
where $\bot$ is the symbol assigned to the outcome if the detector doesn't produce any outcome, and $p^{(A)}_{NL}(a|x)$ and $p^{(B)}_{NL}(b|y)$ are the marginal probability distributions for Alice and Bob respectively.

Let's now consider a model which distributes four classical variables $\lambda_A,\lambda_B,\lambda_X,\lambda_Y$ to the parties. $\lambda_X$ and $\lambda_Y$ are uniformly distributed and have values corresponding to the possible choices of settings, i.e. $p(\lambda_X=x)=\frac{1}{N}$ and $p(\lambda_Y=y)=\frac{1}{M}$. The remaining two have the alphabets corresponding to the outcomes and are sampled using the distribution $p(\lambda_A =a,\lambda_B=b|\lambda_X=x,\lambda_Y=y)=p_{NL}(a,b|x,y)$. This model, for instance, could be realized by an eavesdropper intercepting and measuring the transmission from the source.  The outcomes are generated locally with the following strategy. If $x=\lambda_X$ then $a=\lambda_A$ otherwise $a=\bot$, if $y=\lambda_Y$ then $b=\lambda_B$ otherwise $b=\bot$. It is easy to check that the following strategy reproduces probability distribution (\ref{distribution}) with $\eta_A=\frac{1}{N}$ and $\eta_B=\frac{1}{M}$.
\end{proof}

\begin{attack}[\textbf{\emph{Second lower bound on detector efficiency for QKD}}] \label{bound2}
Let the experiment involve two parties: Alice with $N$ choices of setting $x\in[N]$ and $N'$ outcome alphabet $a\in[N']$, and Bob with $M$ settings $y\in[M]$ and $M'$ outcomes $b\in[M']$. There exists an active attack which allows the eavesdropper to reproduce any no-signalling probability distribution as long as the detector efficiencies satisfy $\eta_A\leq \frac{1}{N'}$ and $\eta_B\leq \frac{1}{M'}$.
\end{attack} 
\begin{proof}
Unlike the previous case, the source distributes now a genuine non-local probability distribution \emph{and} two hidden variables $\lambda_A,\lambda_B$ to the parties. $\lambda_A$ and $\lambda_B$ are uniformly distributed and have values corresponding to the possible outcomes, i.e. $p(\lambda_A=a)=\frac{1}{N'}$ and $p(\lambda_B=b)=\frac{1}{M'}$. 

The outcomes are generated locally with the following strategy. The devices make measurements on the non-local system and record the outcomes $a$ and $b$ for postprocessing. If $a=\lambda_A$ the outcome $a$ is returned, otherwise the returned outcome is $a=\bot$. Analogous strategy is employed by Bob's device. 

We notice that the eavesdropper with access to the hidden variables knows the outcomes of both parties whenever they are different from $\bot$, which in QKD protocols is enough to know all the generated key. The efficiencies generated by this strategy are  $\eta_A=\frac{1}{N'}$ and $\eta_B=\frac{1}{M'}$, respectively.
\end{proof}
We note that in both of these attacks the Eavesdropper has to actively intercept and measure the transmission from the source to obtain the distributions $p(\lambda_A =a,\lambda_B=b|\lambda_X=x,\lambda_Y=y)=p_{NL}(a,b|x,y)$ for the first attack, and non-local distribution in the second attack, justifying the qualifier ``active". As a consequence of these attacks, we obtain the desired lower on the detection efficiency $\min\{\frac{1}{N},\frac{1}{N'}\}$ of a spatially separated measurement device with $N$ inputs and $N'$ outputs for the security of DIQKD protocols.

\bibliographystyle{plain}
%

\end{document}